\documentclass[reqno]{amsart}
\usepackage{amsmath}
\usepackage{amssymb}
\usepackage{amsthm}
\usepackage{amsfonts}
\usepackage{mathtools}
\usepackage{bm}
\usepackage{ulem}
\usepackage{fancyhdr}
\usepackage{hyperref}
\usepackage{framed}
\usepackage{float}
\usepackage[pdftex,final]{graphicx}
\usepackage{tikz}
\usetikzlibrary{calc,arrows}


\oddsidemargin=0,7cm
\textwidth=14,5cm
\parskip=3pt


\usepackage[iso-8859-7]{inputenc}


\theoremstyle{plain}
\newtheorem{thm}{Theorem}[section]

\newtheorem{prop}[thm]{Proposition}
\newtheorem*{theorem*}{Theorem}
\theoremstyle{definition}

\newtheorem{dfn}[thm]{Definition}

\numberwithin{equation}{section}

\newcommand{\mytilde}{\raise.17ex\hbox{$\scriptstyle\mathtt{\sim}$}}

\newcommand{\I}{\ensuremath{\mathcal{I}}}
\renewcommand{\S}{\ensuremath{\mathcal{S}}}
\newcommand{\Ni}{\ensuremath{|\mathcal{N}_i|}}
\newcommand{\RgeO}{\ensuremath{\mathbb{R}_{ \geq 0}}}
\newcommand{\Rat}[1]{\ensuremath{\mathbb{R}^{#1}}}

\begin{document}

\author{D. Boskos}
\address{Department of Automatic Control, School of Electrical Engineering, KTH Royal Institute of Technology, Osquldas v\"ag 10, 10044, Stockholm, Sweden}
\email{boskos@kth.se}

\author{D. V. Dimarogonas}
\address{Department of Automatic Control, School of Electrical Engineering, KTH Royal Institute of Technology, Osquldas v\"ag 10, 10044, Stockholm, Sweden}
\email{dimos@kth.se}

\begin{abstract}
The purpose of this report is to define abstractions for multi-agent systems under coupled constraints. In the proposed decentralized framework,  we specify a finite or countable transition system for each agent which only takes into account the discrete positions of its neighbors. The dynamics of the considered systems consist of two components. An appropriate feedback law which guarantees that certain performance requirements (eg. connectivity) are preserved and induces the coupled constraints and additional free inputs which we exploit in order to accomplish high level tasks. In this work we provide sufficient conditions on the space and time discretization of the system which ensure that we can extract a well posed and hence meaningful finite transition system.
\end{abstract}
\keywords{abstractions, transition systems, multi-agent systems.}

\title{Decentralized Abstractions for Feedback Interconnected Multi-Agent Systems}
\maketitle

\section{Introduction}

Cooperative task planing under temporal logic specifications constitutes a highly active area of research which lies in the interface between computer science and modern control theory. One main challenge in this new interdisciplinary direction is the problem of defining appropriate abstractions for continuous time control systems and hence enabling the analysis and control of large scale systems or the achievement of high level plans. Robot motion planing and control constitutes a central field where this line of work is applied. In particular the use of a suitable discrete system's model allows the specification of high level plans, which under an appropriate equivalence notion between the continuous system and its discrete analog, can be converted to low level primitives such as feedback controllers, that are able to implement the high level tasks. Such tasks in the case of multiple mobile robots in an industrial workspace could include for example the following scenario. Robot 1 should periodically go from region $A$ to region $B$, while avoiding $C$ and after collecting an item of type $X$ from robot 2 at location $D$ and storing it at location $E$.

In order to accomplish high level plans, we need to specify a finite abstraction of our original system, namely a system that preserves some properties of interest of the initial system, while ignoring detail. Results in this direction for the nonlinear centralized case have been obtained in the papers \cite{PgGaTp08}, \cite{ZmPgMmTp12} where the notions of approximate simulation and bisimulation are exploited for certain classes of nonlinear systems under appropriate stability assumptions. The notion of bisimulation, which has its origin in computer science, (see for instance \cite{BcKjp08}, Chapter 7) and refers to transition systems, guarantees that if the initial system and its abstraction are bisimilar, then the task of checking feasibility of high level plans for the original system reduces to the same task for its abstraction and vice versa.

Another tool towards this direction is the hybridization approach \cite{AeDtGa07}, where the behaviour of a nonlinear system is abstracted by means of a piecewise affine hybrid system on simplices. Motion planing techniques for the later case have been developed in the recent works \cite{GaMs08}, \cite{GaMs12}.

In our framework, we focus on multi-agent systems and assume that the agents' dynamics consist of feedback interconnection terms, which ensure that certain system properties as for instance connectivity or (and) invariance are preserved, and free input terms, which provide the ability for motion planning under the coupled constraints. In this report, we aim at quantifying admissible space-time discretizations of our system's behaviour which enable us to capture reachability properties of the original system. In those first results we provide sufficient conditions which establish that the abstraction of our original system is well posed. The later implies that the finite transition system which serves as an abstract model of the multi-agent system has at least one outgoing transition for each discrete state.

\section{Preliminaries and Notation}

We use the notation $|x|$ for the Euclidean norm of a vector $x\in\Rat{n}$. For a subset $S$ of $\Rat{n}$, we denote by ${\rm cl}(S)$, ${\rm int}(S)$ and $\partial S$ its closure, interior and boundary, respectively, where $\partial S:={\rm cl}(S)\setminus{\rm int}(S)$. Given $R>0$ and $y\in\Rat{n}$, we denote by $B(R)$ the closed ball with center $0\in\Rat{n}$ and radius $R$, namely $B(R):=\{x\in\Rat{n}:|x|\le R \}$ and $B_x(R):=\{x\in\Rat{n}:|x-y|\le R \}$. Given two sets $A,B\in\Rat{n}$ their Minkowski sum is defined as
\begin{equation*}
A+B:=\{x+y\in\Rat{n}:x\in A, y\in B\}
\end{equation*}

Consider a multi-agent system with $N$ agents. For each agent $i\in\{1,\ldots,N\}$ we use the notation $\mathcal{N}_i$ for the set of its neighbors and $\Ni$ for its cardinality. We also consider an ordering of the agent's neighbors which we denote by $j_1,\ldots,j_{\Ni}$. Given an index set $\I$ and an agent $i\in\{1,\ldots,N\}$ with neighbors $j_1,\ldots,j_{\Ni}\in\{1,\ldots,N\}$, we define the mapping ${\rm pr}_i:\I^N\to\I^{\Ni+1}$ which assigns to each $N$-tuple $(l_1,\ldots,l_N)\in\I^N$ the $\Ni+1$-tuple $(l_1,l_{j_1},\ldots,l_{j_{\Ni}})\in\I^{\Ni+1}$.

We proceed by providing a formal definition for the notion of a transition system (see for instance \cite{BcKjp08}, \cite{Pg03}, \cite{PgGaTp08}).

\begin{dfn}
A transition system is a quintuple $TS:=(Q,L,\longrightarrow,O,H)$, where:

\textbullet\; $Q$ is a set of states.

\textbullet\; $L$ is a set of actions.

\textbullet\; $\longrightarrow$ is a transition relation with $\longrightarrow\subset Q\times L\times Q$.

\textbullet\; $O$ is an output set.

\textbullet\; $H$ is an output function from $Q$ to $O$.

\noindent The transition system is said to be finite, if $Q$ and $L$ are finite sets. We also use the (standard) notation $q\overset{l}{\longrightarrow} q'$ to denote an element $(q,l,q')\in\longrightarrow$. For every $q\in Q$ and $l\in L$ we use the notation ${\rm Post}(q;l):=\{q'\in Q:(q,l,q')\in\longrightarrow\}$.
\end{dfn}

\noindent We have adopted the definition from \cite{PgGaTp08} with the modification of naming the elements of the set $L$ actions (see \cite{BcKjp08}, Ch. 2) instead of labels as in \cite{PgGaTp08}. We will clarify this choice in the next section.

\section{Abstractions for Multi-Agent Systems}

We focus on multi-agent systems with single integrator dynamics
\begin{equation}\label{single:integrator}
\dot{x}_{i}=u_{i},x_{i}\in\Rat{n},i=1,\ldots,N
\end{equation}

\noindent and consider as inputs decentralized control laws of the form
\begin{equation}\label{general:feedback:law}
u_{i}=f_{i}(x_{i},x_{j_{1}},\ldots,x_{j_{|\mathcal{N}_{i}|}})+v_{i}, i=1,\ldots,N
\end{equation}

\noindent consisting of two terms, the feedback term $f_{i}(\cdot)$ which depends on the states of $i$ and its neighbors, and the free input $v_{i}$. We assume that for each $i=1,\ldots,N$ it holds $x_{i}\in D$ where $D$ is a domain of $\Rat{n}$ and that each $f_{i}(\cdot)$ is locally Lipschitz. We also assume that $v_{i}\in\mathcal{U}_{i}$, $i=1,\ldots,N$ where $\mathcal{U}_{i}$ is a bounded subset of $L^{\infty}(\RgeO;\Rat{n})$ for each $i$ and define $\mathcal{U}:=\mathcal{U}_{1}\times\cdots\times\mathcal{U}_{N}$.

In order to justify our subsequent analysis, we assume that the $f_i$'s are globally bounded and that the maximum magnitude of the feedback terms is higher than that of the free inputs, since we are primarily interested in maintaining the property that the feedback is designed for and, secondarily, in exploiting the free inputs in order to accomplish high level tasks. In what follows, we consider a cell decomposition of the state space $D$ (which can be regarded as a partition of $D$) and a time discretization step $\delta t>0$. In particular, we adopt a modification of the corresponding definition from \cite[p 129-called cell covering]{Gl02}.

\begin{dfn} \label{cell:decomposition}
Let $D$ be a domain of $\Rat{n}$. A cell decomposition $\mathcal{S}=\{S_{l}\}_{l\in\mathcal{I}}$ of $D$, where $\mathcal{I}$ is a finite or coutable index set, is a finite or countable family of uniformly bounded sets $S_{l}$, $l\in\mathcal{I}$ whose interior is a domain, such that ${\rm int}(S_{l})\cap {\rm int}(S_{\hat{l}})=\emptyset$ for all $l\ne\hat{l}$ and $\cup_{l\in\mathcal{I}} S_{l}=D$.
\end{dfn}

\noindent Our ultimate goal is to define finite abstractions for closed loop multi-agent systems of the form  \eqref{single:integrator}-\eqref{general:feedback:law} which evolve inside a bounded domain and satisfy the following \textbf{invariance assumption}.

\noindent \textbf{IA.} For every initial condition $x(0)\in A$ of system  \eqref{single:integrator}-\eqref{general:feedback:law} where $A$ is an appropriate subset of $D$ and every free input $v\in\mathcal{U}$ there exists a unique solution for \eqref{single:integrator}-\eqref{general:feedback:law} which is defined for all $t\ge 0$ and remains in $D$ (for all $t\ge 0$).

A motivating example for this framework has been studied in our companion work \cite{BdDd15} where network connectivity as well as invariance of the system's solution inside a bounded domain and robustness of those properties with respect to free inputs are guaranteed for the single integrator model. A finite cell decomposition in that case can lead to a finite transition system which captures the properties of interest of the multi-agent system and hence enables the investigation for computable solutions with respect to high level plan specifications.

A basic feature that we want to satisfy through our space and time discretization is the possibility to maintain some of the reachability properties of the nonlinear system, when we consider the finite transition system that results from the cell decomposition and the time discretization. Informally, we would like to consider for each agent $i$ the transition system with states the possible modes of the cell decomposition, namely the cells of the state partition, labels all the possible cells of the agents neighbors and transition relation defined in the sense that a final cell is reachable from an initial one, if for all states in the initial cell there is a free input such that the solution of the system will reach the final cell at time $\delta t$ for all possible initial states of the agents neighbors and their corresponding free inputs. Feasibility of high level plans requires the corresponding system to be well posed-meaningful, which implies that for each initial cell it is possible to transit to (at least) one final cell.

One main challenge in the attempt to provide meaningful decentralized abstractions even in this fully actuated with respect to the free inputs case is the interconnection between the agents through the $f_{i}(\cdot)$ terms. The later leads us to reformulate our informal consideration above and motivates us to define appropriate hybrid feedback laws in the place of the $v_{i}$'s which will guarantee our desired well posed transitions. Before proceeding to the necessary definitions related to our problem formulation, we provide some bounds on the dynamics of the multi agent system. In order to simplify the subsequent analysis, which we aim to appropriately modify in order to include domains satisfying (IA) and hence extract finite transition systems, we assume for \eqref{single:integrator}-\eqref{general:feedback:law} that $D=A=\Rat{n}$.

We also assume that the feedback terms $f_{i}(\cdot)$ are globally bounded, namely, there exists a constant $M>0$ such that
\begin{equation} \label{dynamics:bound}
|f_{i}(x_{i},x_{j_{1}},\ldots,x_{j_{|\mathcal{N}_{i}|}})|\le M, \forall (x_{i},x_{j_{1}},\ldots,x_{j_{|\mathcal{N}_{i}|}})\in\Rat{(|\mathcal{N}_{i}|+1)n} \\
\end{equation}

\noindent Furthermore, instead of considering arbitrary input sets $\mathcal{U}_{i}$, $i=1,\ldots,N$ we require that the free inputs $v_{i}$ satisfy the bound
\begin{equation}\label{input:bound}
|v_{i}(t)|\le v_{\max},\forall t\ge 0, i=1,\ldots,N
\end{equation}

\noindent Given the time step $\delta t$, and the bounds $M$ and $v_{\max}$ on the feedback and input terms, we introduce the following lengthscale
\begin{equation} \label{Rmax}
R_{\max}:=\delta t(M+v_{\max})
\end{equation}

\noindent with $M$ and $v_{\max}$ as given in \eqref{dynamics:bound} and \eqref{input:bound}, respectively. It follows from \eqref{single:integrator}-\eqref{general:feedback:law}, \eqref{dynamics:bound}, \eqref{input:bound} and \eqref{Rmax} that $R_{\max}$ is the maximum distance an agent can cross within time $\delta t$.

Given a cell decomposition $\S :=\{S_l\}_{l\in\I}$ of $\Rat{n}$, we frequently use the notation $\tilde{l}_i=(l_i,l_i^1,\ldots,$ $l_i^{\Ni})\in\I^{\Ni+1}$ to denote the indices of the cells where agent $i$ and its neighbors belong at a certain time instant (usually at $t=0$). We also refer to $\tilde{l}_i$ as a (initial) cell configuration of agent $i$. Similarly, we use the notation $\bar{l}=(\bar{l}_{1},\ldots,\bar{l}_{N})\in\mathcal{I}^{N}$ to specify the indices of the cells where all the $N$ agents belong at a given time instant. Thus, given a cell configuration $\bar{l}$ we can determine the cell configuration $\tilde{l}_i$ of agent $i$ through the mapping ${\rm pr}_i:\I^N\to\I^{\Ni+1}$, namely $\tilde{l}_i={\rm pr}_i(\bar{l})$ (see Notations). In this report, we are primarily interested in the evolution of the system on the time interval $[0,\delta t]$, since we focus on the transitions
from initial states at $t=0$ to final states at $t=\delta t$. Thus, we will also use the term final cell configuration when referring to the time instant $\delta t$.

Before defining the notion of a well posed space time discretization we provide a class of hybrid feedback laws, parameterized by the agents initial conditions, which we assign to the free inputs $v_i$ in order to obtain meaningful discrete transitions.

\begin{dfn}
\noindent Given a space-time discretization $\mathcal{S}-\delta t$ ($\S :=\{S_l\}_{l\in\I}$) an agent $i\in\{1,\ldots,N\}$ and an initial cell configuration
\begin{equation*}
\tilde{l}_i=(l_i,l_i^1,\ldots,l_i^{\Ni})\in\I^{\Ni+1}
\end{equation*}

\noindent of $i$, we say that the mapping
\begin{equation*}
[0,T)\times\Rat{(\Ni+1)n}\times\Rat{n}\ni(t,x_{i},x_{j_{1}},\ldots,x_{j_{|\mathcal{N}_{i}|}};x_{i0})\to k_{i,\tilde{l}_i}(t,x_{i},x_{j_{1}},\ldots,x_{j_{|\mathcal{N}_{i}|}};x_{i0})\in\Rat{n}
\end{equation*}

\noindent satisfies property \textbf{(P)}, if the following hold.

\noindent\textbf{(P1)} $T>\delta t$.

\noindent\textbf{(P2)} For each $x_{i0}\in\Rat{n}$ the mapping $k_{i,\tilde{l}_i}(\cdot;x_{i0}):[0,T)\times\Rat{(\Ni+1)n}\to\Rat{n}$ is locally Lipschitz continuous.

\noindent\textbf{(P3)} It holds
\begin{align}
|k_{i,\tilde{l}_i}(t,x_{i},x_{j_{1}},\ldots,x_{j_{|\mathcal{N}_{i}|}};x_{i0})| & \le v_{\max},\forall t\in[0,\delta t], \nonumber \\
x_{i}\in S_{l_i}+B(R_{\max}), x_{j_\kappa}\in S_{l_i^{\kappa}} & +B(R_{\max}),\kappa=1,\ldots,\Ni,x_{i0}\in S_{l_i} \label{feedback:k:bound}
\end{align}

\noindent with $v_{\max}$ as given in \eqref{input:bound} and $R_{\max}$ as in \eqref{Rmax}.
\end{dfn}

\noindent We following provide the definition of a well posed space-time discretization, in accordance to our previous discussions.

\begin{dfn}\label{well:posed:discretization}
Consider a cell decomposition $\S=\{S_l\}_{l\in\I}$ of $\Rat{n}$ and a time step $\delta t$.

\noindent \textbf{(a)} Given an agent $i\in\{1,\ldots,N\}$, an initial cell configuration $\tilde{l}_i=(l_i,l_i^1,\ldots,l_i^{\Ni})\in\I^{\Ni+1}$ of $i$ and a cell index $l_i'\in\I$ we say that the transition $l_i\overset{\tilde{l}_i}{\longrightarrow}l_i'$ is well posed with respect to the space-time discretization $\S-\delta t$ if there exists a feedback law
\begin{equation} \label{feedback:for:i}
k_{i,\tilde{l}_i}(t,x_{i},x_{j_{1}},\ldots,x_{j_{|\mathcal{N}_{i}|}};x_{i0})
\end{equation}

\noindent parameterized by $x_{i0}\in\Rat{n}$ (the initial condition of $i$) and satisfying property (P), such that condition (C) below is fulfilled.

\noindent \textbf{(C)} For each initial cell configuration $\bar{l}=(\bar{l}_{1},\ldots,\bar{l}_{N})\in\mathcal{I}^{N}$ with ${\rm pr}_i(\bar{l})=\tilde{l}_i$ and for all $\hat{i}\in\{1,\ldots,N\}\setminus \{{i}\}$ and feedback laws
\begin{equation} \label{feedback:for:others}
k_{\hat{i},\tilde{l}_{\hat{i}}}(t,x_{\hat{i}},x_{\hat{j}_{1}},\ldots,x_{\hat{j}_{|\mathcal{N}_{i}|}};x_{\hat{i}0})
\end{equation}

\noindent parmeterized by $x_{\hat{i}0}\in\Rat{n}$ (the initial condition of $\hat{i}$) and satisfying property (P), (with $\tilde{l}_{\hat{i}}={\rm pr}_{\hat{i}}(\bar{l})$) the solution of the closed loop system \eqref{single:integrator}-\eqref{general:feedback:law}-\eqref{feedback:for:i}-\eqref{feedback:for:others} (with $v_{\kappa}=k_{\kappa,\tilde{l}_{\kappa}}$, $\kappa=1,\ldots,N$) satisfies

\begin{equation*}
x_{i}(\delta t,x(0))\in S_{l_i'}
\end{equation*}
\noindent for all initial conditions $x(0)\in\Rat{n}$ with $x_{\kappa}(0)=x_{\kappa 0}\in S_{\bar{l}_{\kappa}}$, $\kappa=1,\ldots,N$.

\noindent \textbf{(b)} We say that the space-time discretization $\S-\delta t$ is well posed if for each agent $i\in\{1,\ldots,N\}$ and cell configuration $\tilde{l}_i=(l_i,l_i^1,\ldots,l_i^{\Ni})\in\I^{\Ni+1}$ of $i$, there exists a cell index $l_i'\in\I$ such that the transition $l_i\overset{\tilde{l}_i}{\longrightarrow}l_i'$ is well posed with respect to $\S-\delta t$.
\end{dfn}

Given a space-time discretization $\S-\delta t$ and based on Definition \ref{well:posed:discretization}(a), we are in a position to provide an exact description of the discrete transition system which serves as an abstract model for the behaviour of each agent. At this point, we do not focus on the output set and map of the transition system and just provide the definition of its state set, label set and transition relation. In particular, for each agent $i$, we define the discrete transition system $TS_i:=(Q,L_i,\longrightarrow_i)$ with state set $Q$ the indices $\I$ of the cell decomposition, actions all possible cell indices of $i$ and its neighbors, namely $L_i:=I^{\Ni+1}$ (the set of all possible cell configurations of $i$) and transition relation $\longrightarrow_i\subset Q\times L_i\times Q$ defined as follows. For any $\hat{l}_i,\hat{l}_i'\in Q$ and $\tilde{l}_i=(l_i,l_i^1,\ldots,l_i^{\Ni})\in\I^{\Ni+1}$
$$
\hat{l}_i\overset{\tilde{l}_i}{\longrightarrow_i}\hat{l}_i'
$$

\noindent iff
$$
\hat{l}_i=l_i\quad{\rm and}\quad l_i\overset{\tilde{l}_i}{\longrightarrow}\hat{l}_i'\quad\textup{is well posed}
$$

\noindent We have preferred to use the term actions instead of labels for the elements of the set $L_i$, because the cell configuration of $i$ indicates how the feedback term $f_i(\cdot)$ acts on-affects the possible transitions of agent $i$.

According to Definition \ref{well:posed:discretization}, a well posed space-time discretization requires the existence of a well posed transition for each agent $i$ and the latter reduces to the selection of an appropriate feedback controller for $i$, which also satisfies Property (P), and the requirement that the other agents also satisfy (P). Yet, it is not completely evident, that given an initial cell configuration and a well posed transition for each agent, that we can choose a feedback law for each, so that the resulting closed loop system will guarantee all these transitions (for all possible initial conditions in the cell configuration). The following proposition clarifies this point.

\begin{prop}\label{discrete:transitions:result}
Consider system \eqref{single:integrator}-\eqref{general:feedback:law}, let $\bar{l}=(\bar{l}_{1},\ldots,\bar{l}_{N})\in\mathcal{I}^{N}$ be an initial cell configuration and assume that the space-time discretization $S-\delta t$ is well posed, which implies that for all $i=1,\ldots,N$ it holds ${\rm Post}_i(\bar{l}_{i};{\rm pr}_{i}(\bar{l}))\ne\emptyset$ (${\rm Post}_i(\cdot)$ refers to the transition system $TS_i$ of each agent -see also Notations). Then for every final cell configuration
\begin{equation}\label{final:cc}
\bar{l}'=(\bar{l}_1',\ldots,\bar{l}_N')\in{\rm Post}_i(\bar{l}_1;{\rm pr}_1(\bar{l}))\times\cdots\times{\rm Post}_i(\bar{l}_N;{\rm pr}_N(\bar{l}))
\end{equation}

\noindent there exist feedback laws
\begin{equation} \label{feedback:for:all}
k_{i,{\rm pr}_i(\bar{l})}(t,x_{i},x_{j_{1}},\ldots,x_{j_{|\mathcal{N}_{i}|}};x_{i0}),i=1,\ldots,N
\end{equation}

\noindent satisfying property (P) and such that for each i=1,\ldots,N the $i$-th component of the solution of the closed loop system \eqref{single:integrator}-\eqref{general:feedback:law}-\eqref{feedback:for:all}  (with $v_{\kappa}=k_{\kappa,{\rm pr}_{\kappa}(\bar{l})}$, $\kappa=1,\ldots,N$) satisfies
\begin{equation}\label{contoler:compatibility}
x_{i}(\delta t,x(0))\in S_{\bar{l}_i'}, \forall x(0)\in\Rat{Nn}:x_{\kappa}(0)=x_{\kappa 0}\in S_{\bar{l}_{\kappa}},\kappa=1,\ldots,N
\end{equation}
\end{prop}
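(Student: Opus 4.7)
The plan is to take, for each agent, a witness feedback supplied by the well-posedness of its own transition, and to observe that these $N$ witnesses are automatically compatible as a combined controller. The driving idea is that condition~(C) in Definition~\ref{well:posed:discretization}(a) demands each witness to steer agent~$i$ into $S_{\bar{l}_i'}$ against \emph{every} admissible choice of feedback laws for the remaining agents, and hence in particular when those laws are the other $N-1$ witnesses themselves.

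Concretely, for each $i \in \{1,\ldots,N\}$, the hypothesis $\bar{l}_i' \in {\rm Post}_i(\bar{l}_i;{\rm pr}_i(\bar{l}))$ together with Definition~\ref{well:posed:discretization}(a) produces a feedback law $k_{i,{\rm pr}_i(\bar{l})}^{\star}$ satisfying property~(P) and condition~(C) for the transition $\bar{l}_i \overset{{\rm pr}_i(\bar{l})}{\longrightarrow} \bar{l}_i'$. I would pick precisely these $N$ laws as the candidates for \eqref{feedback:for:all}; property~(P) for each of them is then inherited immediately.

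The next step is to check that the combined closed-loop system \eqref{single:integrator}--\eqref{general:feedback:law} obtained by setting $v_\kappa = k_{\kappa,{\rm pr}_\kappa(\bar{l})}^{\star}$ admits a unique solution on $[0,\delta t]$ for every initial condition with $x_\kappa(0) \in S_{\bar{l}_\kappa}$. Local existence and uniqueness are immediate from~(P2) and the local Lipschitz continuity of each $f_i$. To extend to the full interval I would run a standard bootstrap based on~(P3): as long as each $x_\kappa(t)$ remains in $S_{\bar{l}_\kappa}+B(R_{\max})$, the bounds \eqref{dynamics:bound} and \eqref{feedback:k:bound} give $|\dot{x}_\kappa(t)| \le M + v_{\max}$, and by \eqref{Rmax} this prevents any agent from travelling farther than $R_{\max}$ in time $\delta t$, keeping the trajectory inside the enlarged cell and closing the bootstrap.

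Finally, I would apply condition~(C) once for each~$i$: with $k_{i,{\rm pr}_i(\bar{l})}^{\star}$ held fixed and the other witnesses $k_{\hat{i},{\rm pr}_{\hat{i}}(\bar{l})}^{\star}$, $\hat{i} \ne i$, playing the role of the ``other agents'' feedbacks (which satisfy~(P) by construction), (C) delivers $x_i(\delta t, x(0)) \in S_{\bar{l}_i'}$ for all admissible $x(0)$, and repeating the argument for $i = 1,\ldots,N$ yields \eqref{contoler:compatibility}. I expect the main obstacle to be the bootstrap step: because~(P3) only bounds $k_{i,\tilde{l}_i}$ by $v_{\max}$ inside the enlarged cells, some care is required to ensure that the combined trajectories actually remain in those cells throughout $[0,\delta t]$, and this is the one place where the localised bound~(P3) is used in an essential way beyond mere point-wise boundedness of the free inputs.
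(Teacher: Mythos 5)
Your proposal is correct and follows essentially the same route as the paper: choose for each agent the witness feedback guaranteed by well-posedness of its own transition, note that all $N$ witnesses satisfy property (P), and then invoke condition (C) once per agent with the remaining witnesses playing the role of the other agents' (P)-feedbacks. The existence-on-$[0,\delta t]$ bootstrap you flag as the delicate point is exactly the content of Proposition \ref{completeness:result}, which the paper establishes separately rather than inside this proof.
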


\begin{proof}
Indeed, consider a final cell configuration $\bar{l}'=(\bar{l}_1',\ldots,\bar{l}_N')$ as in \eqref{final:cc} and select for each agent $i\in\{1,\ldots,N\}$ a control law $k_{i,{\rm pr}_i(\bar{l})}(\cdot)$ which ensures that $\bar{l}_i\overset{{\rm pr}_i(\bar{l})}{\longrightarrow_i}\bar{l}_i'$ is well posed. It follows from Definition \ref{well:posed:discretization}(a) that all the feedback laws $k_{i,{\rm pr}_i(\bar{l})}(\cdot)$, $i=1,\ldots,N$ satisfy Property (P) and hence, from Condition (C), that for each $i=1,\ldots,N$ the $i$-th component of the solution of the closed loop system satisfies \eqref{contoler:compatibility}.
\end{proof}

The result of the following proposition guarantees that the selection of the controllers introduced in Definition \ref{well:posed:discretization} provide well posed solutions for the closed loop system on the time interval $[0,\delta t]$. We exploit this result in Proposition \ref{admissible:discretizations} where we derive sufficient conditions for well posed space-time discretizations. Furthermore, Proposition \ref{completeness:result} guarantees that the magnitude of the hybrid feedback laws does not exceed the maximum allowed magnitude of the free inputs $v_{\max}$ on $[0,\delta t]$ and hence establishes consistency with our initial design requirement.

\begin{prop} \label{completeness:result}
Consider the space-time discretization $\S-\delta t$ corresponding to the cell decomposition $\S$ of $\Rat{n}$ and the time step $\delta t$. Let $\bar{l}=(\bar{l}_{1},\ldots,\bar{l}_{N})\in\mathcal{I}^{N}$ be an initial cell configuration and consider the feedback laws
\begin{equation} \label{feedback:for:all2}
k_{i,{\rm pr}_i(\bar{l})}(t,x_{i},x_{j_{1}},\ldots,x_{j_{|\mathcal{N}_{i}|}};x_{i0}),i=1,\ldots,N
\end{equation}

\noindent assigned to the agents that satisfy property (P). Then for all initial conditions $x(0)\in\Rat{n}$ with $x_{i}(0)=x_{i0}\in S_{\bar{l}_i}$, $i=1,\ldots,N$ the solution of the closed loop system \eqref{single:integrator}-\eqref{general:feedback:law}-\eqref{feedback:for:all2} (with $v_i=k_{i,{\rm pr}_i(\bar{l})}$, $i=1,\ldots,N$) is defined on $[0,\delta t]$ and each component $x_i(\cdot)$, $i=1,\ldots,N$ of the solution satisfies
\begin{equation} \label{xis:in:initialcell:plus:Rmax}
x_i(t)\in S_{\bar{l}_i}+B(R_{\max}),\forall t\in[0,\delta t)
\end{equation}

\noindent Hence, it follows from \eqref{xis:in:initialcell:plus:Rmax}, (P3) and continuity of the solution $x(\cdot)$ that

\begin{equation}
|k_{i,{\rm pr}_i(\bar{l})}(t,x_{i}(t),x_{j_{1}}(t),\ldots,x_{j_{|\mathcal{N}_{i}|}}(t);x_{i0})|\le v_{\max},\forall t\in[0,\delta t], i=1,\ldots,N
\end{equation}

\noindent which provides the desired consistency with our design requirement \eqref{input:bound} on the $v_i$'s.
\end{prop}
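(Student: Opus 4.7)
The plan is a standard bootstrap/continuation argument driven by (P3). Local existence and uniqueness of a solution on some maximal interval $[0,T_{\max})$ are automatic: each $f_i$ is locally Lipschitz, each $k_{i,{\rm pr}_i(\bar{l})}$ is locally Lipschitz in its spatial arguments by (P2), and $x_{i0}$ enters only as a parameter, so the closed-loop vector field is locally Lipschitz in the aggregate state. The content of the proposition is that the uniform bounds on $f_i$ and on $k_{i,{\rm pr}_i(\bar{l})}$ force each agent to remain within $R_{\max}$ of its initial cell on the whole of $[0,\delta t)$, which in particular precludes finite-time blow-up before $\delta t$.

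Concretely, I would introduce
$$T^*:=\sup\{T\in[0,\min\{\delta t,T_{\max}\}]:x_i(s)\in S_{\bar{l}_i}+B(R_{\max})\ \forall s\in[0,T],\ i=1,\ldots,N\}.$$
Continuity of the solution together with $x_i(0)=x_{i0}\in S_{\bar{l}_i}$ gives $T^*>0$. For $t\in[0,T^*)$ the hypotheses of (P3) are met for every agent simultaneously, since if $j_\kappa$ is a neighbor of $i$ then $x_{j_\kappa}(t)\in S_{\bar{l}_{j_\kappa}}+B(R_{\max})=S_{l_i^\kappa}+B(R_{\max})$ with $\tilde{l}_i={\rm pr}_i(\bar{l})$; consequently $|k_{i,{\rm pr}_i(\bar{l})}|\le v_{\max}$ on $[0,T^*)$. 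Combining with the global bound \eqref{dynamics:bound} yields $|\dot{x}_i(t)|\le M+v_{\max}$ on $[0,T^*)$, and integrating gives the a priori estimate $|x_i(t)-x_{i0}|\le(M+v_{\max})t$.

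To close the argument I would show $T^*=\delta t$ by contradiction. If $T^*<\delta t$, the a priori estimate keeps the solution bounded on $[0,T^*)$, so $T^*<T_{\max}$ and $x(\cdot)$ extends continuously to $T^*$; passing to the limit, $|x_i(T^*)-x_{i0}|\le(M+v_{\max})T^*<R_{\max}$, placing each $x_i(T^*)$ strictly inside $S_{\bar{l}_i}+B(R_{\max})$. Continuity then forces the containment to persist for a short interval past $T^*$, contradicting maximality. Hence $T^*=\delta t$, which yields \eqref{xis:in:initialcell:plus:Rmax}, extends the solution to $[0,\delta t]$, and — together with (P3) and continuity up to $t=\delta t$ — gives the final bound on $|k_{i,{\rm pr}_i(\bar{l})}|$. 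The main point to watch is structural rather than technical: (P3) constrains the spatial arguments of $k_{i,{\rm pr}_i(\bar{l})}$, not trajectories, so one must certify that all $N$ agents remain in their respective expanded cells simultaneously before invoking it; this is precisely why $T^*$ is defined uniformly over $i$, and once that is arranged the bootstrap closes on the joint state.
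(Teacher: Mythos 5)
Your proposal is correct and follows essentially the same route as the paper: a continuation argument in which the exit time from the expanded cells $S_{\bar{l}_i}+B(R_{\max})$ is shown to be at least $\delta t$ via the key estimate $|x_i(t)-x_{i0}|\le (M+v_{\max})t<R_{\max}$ for $t<\delta t$, the paper phrasing the endgame as a contradiction with a boundary point (its Fact I, which is exactly the small geometric claim behind your ``strictly inside'' step) rather than as interiority propagating past the maximal time. No gaps worth flagging.
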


\begin{proof}
Let $x(0)\in\Rat{Nn}$ with $x_{i}(0)\in S_{\bar{l}_i}$, $i=1,\ldots,N$ be the initial condition of the closed loop system. Then it follows from the local Lipschitz property on the functions $f_{i}(\cdot)$ and the corresponding property on the mappings $k_{i,{\rm pr}_i(\bar{l})}(\cdot;x_{i0})$ provided by (P2), that there exists a unique solution $x(\cdot)=x(\cdot,x(0))$ to the initial value problem defined on the right maximal interval of existence $[0,T_{\max})$. We proceed by proving that \eqref{xis:in:initialcell:plus:Rmax} holds as well (this also implies that $T_{\max}>\delta t$). Indeed, suppose on the contrary that \eqref{xis:in:initialcell:plus:Rmax} is violated and hence, that there exists $\hat{i}\in\{1,\ldots,N\}$ and a time $T$ with
\begin{equation} \label{xhati:atT}
T\in (0,\delta t)\;{\rm and}\;x_{\hat{i}}(T)\notin S_{\bar{l}_{\hat{i}}}+B(R_{\max})
\end{equation}

\noindent By exploiting continuity of $x(\cdot)$ we may define
\begin{equation} \label{time:tau}
\tau:=\max\{\bar{t}\in [0,T]:x_{i}(t)\in{\rm cl}(S_{\bar{l}_{i}}+B(R_{\max})),\forall t\in [0,\bar{t}],i=1,\ldots,N\}
\end{equation}

\noindent Then, it follows from \eqref{xhati:atT} and \eqref{time:tau} that there exists $\tilde{i}\in\{1,\ldots,N\}$ such that
\begin{equation} \label{xtildei:at:tau}
x_{\tilde{i}}(\tau)\in\partial(S_{\bar{l}_{\tilde{i}}}+B(R_{\max}))
\end{equation}

\noindent and that
\begin{equation} \label{tau:vs:deltat}
\tau<T\le\delta t
\end{equation}

\noindent It also follow from \eqref{time:tau} that
\begin{equation}
x_{i}(t)\in{\rm cl}(S_{\bar{l}_{i}}+B(R_{\max})),\forall t\in [0,\tau],i=1,\ldots,N
\end{equation}

\noindent and thus from property (P3) and continuity of $x(\cdot)$ and $k_{\tilde{i},{\rm pr}_{\tilde{i}}(\bar{l})}(\cdot;x_{\tilde{i}0})$ that
\begin{equation} \label{feedback:for:tildei}
|k_{\tilde{i},{\rm pr}_{\tilde{i}}(\bar{l})}(t,x_{\tilde{i}}(t),x_{\tilde{j}_{1}}(t),\ldots,x_{\tilde{j}_{|\mathcal{N}_{\tilde{i}}|}}(t);x_{\tilde{i}0})|\le v_{\max},\forall t\in[0,\tau]
\end{equation}

\noindent Hence, we get from \eqref{single:integrator}-\eqref{general:feedback:law}, \eqref{Rmax}, \eqref{feedback:for:all2}, \eqref{feedback:for:tildei} and \eqref{tau:vs:deltat} that
\begin{align*}
|x_{\tilde{i}}(\tau)-x_{i0}|&=\left|\int_{0}^{\tau}f_{\tilde{i}}(x_{\tilde{i}}(s),x_{\tilde{j}_{1}}(s),\ldots,x_{\tilde{j}_{|\mathcal{N}_{\tilde{i}}|}}(s))+k_{\tilde{i},{\rm pr}_{\tilde{i}}(\bar{l})}(s,x_{\tilde{i}}(s),x_{\tilde{j}_{1}}(s),\ldots,x_{\tilde{j}_{|\mathcal{N}_{\tilde{i}}|}}(s);x_{\tilde{i}0})ds\right| \\
&\le \int_{0}^{\tau}|f_{\tilde{i}}(x_{\tilde{i}}(s),x_{\tilde{j}_{1}}(s),\ldots,x_{\tilde{j}_{|\mathcal{N}_{\tilde{i}}|}}(s))|+|k_{\tilde{i},{\rm pr}_{\tilde{i}}(\bar{l})}(s,x_{\tilde{i}}(s),x_{\tilde{j}_{1}}(s),\ldots,x_{\tilde{j}_{|\mathcal{N}_{\tilde{i}}|}}(s);x_{\tilde{i}0})|ds \\
& \le \int_{0}^{\tau}(M+v_{\max})ds=\tau(M+v_{\max})<\delta t(M+v_{\max})=R_{\max}
\end{align*}

\noindent It thus follows from Fact I in the Appendix that $x_{\tilde{i}}(\tau)\notin \partial(S_{\bar{l}_{\tilde{i}}}+B(R_{\max}))$ which contradicts \eqref{xtildei:at:tau} and the proof is complete.
\end{proof}

\section{Admissible Space-Time Discretizations}

We proceed by providing some extra details for the dynamics as determined by the feedback law in \eqref{general:feedback:law}. Assuming that the $f_i$'s are globally Lipschitz functions it follows that there exists a constant $L>0$ such that
\begin{align*}
|f_{i}(x_{i},x_{j_{1}},\ldots,x_{j_{|\mathcal{N}_{i}|}})-f_{i}(y_{i},y_{j_{1}},\ldots,y_{j_{|\mathcal{N}_{i}|}})|&\le L|(x_{i},x_{j_{1}},\ldots,x_{j_{|\mathcal{N}_{i}|}})-(y_{i},y_{j_{1}},\ldots,y_{j_{|\mathcal{N}_{i}|}})|, \\
\forall (x_{i},x_{j_{1}},\ldots,x_{j_{|\mathcal{N}_{i}|}}),(y_{i},y_{j_{1}},\ldots,y_{j_{|\mathcal{N}_{i}|}}) & \in\Rat{(|\mathcal{N}_{i}|+1)n}
\end{align*}

\noindent Furthermore, if we want to achieve more accurate bounds for the dynamics of the feedback controllers we assign to the free inputs $v_i$ (those will be clarified in the proof of Proposition \ref{admissible:discretizations}), we can choose (posssibly) different Lipschitz constants $L_{1},L_{2}>0$ such that
\begin{align}
|f_{i}(x_{i},x_{j_{1}},\ldots,x_{j_{|\mathcal{N}_{i}|}})-f_{i}(x_{i},y_{j_{1}},\ldots,y_{j_{|\mathcal{N}_{i}|}})| & \le L_{1}|(x_{i},x_{j_{1}},\ldots,x_{j_{|\mathcal{N}_{i}|}})-(x_{i},y_{j_{1}},\ldots,y_{j_{|\mathcal{N}_{i}|}})|,  \nonumber \\
\forall x_{i}\in\Rat{n},(x_{j_{1}},\ldots,x_{j_{|\mathcal{N}_{i}|}}),(y_{j_{1}},\ldots,y_{j_{|\mathcal{N}_{i}|}}) & \in \Rat{|\mathcal{N}_{i}|n} \label{dynamics:bound1} \\
|f_{i}(x_{i},x_{j_{1}},\ldots,x_{j_{|\mathcal{N}_{i}|}})-f_{i}(y_{i},x_{j_{1}},\ldots,x_{j_{|\mathcal{N}_{i}|}})| & \le L_{2}|(x_{i},x_{j_{1}},\ldots,x_{j_{|\mathcal{N}_{i}|}})-(y_{i},x_{j_{1}},\ldots,x_{j_{|\mathcal{N}_{i}|}})|, \nonumber \\
\forall x_{i},y_{i} \in\Rat{n},(x_{j_{1}},\ldots,x_{j_{|\mathcal{N}_{i}|}}) & \in \Rat{|\mathcal{N}_{i}|n} \label{dynamics:bound2}
\end{align}

\noindent In order to provide some extra motivation on considering both constants $L_1$ and $L_2$, we note that in order to derive sufficient conditions for a well posed discretization, we design for each agent $i$ inside a cell $S_{l_i}$ a feedback, in order to ``track" a given reference trajectory (of $i$) starting in the same cell. In particular, the constant $L_{1}$ provides bounds on our choice of feedback in order to compensate for the deviation of agent's $i$ dynamics from its corresponding dynamics along the reference trajectory, due to the time evolution of its neighbors states. On the other hand, the constant $L_{2}$ provides bounds on our choice of feedback in order to compensate for the deviation of the initial state with respect to the initial state of the reference trajectory.

In order to apply the previous results it is convenient that we define the least upper bound on the diameter of the cells in $\mathcal{S}$, namely

\begin{equation*}
d_{\max}:=\sup\{\sup\{|x-y|:x,y\in S_{l}\}:l\in\mathcal{I}\}
\end{equation*}

\noindent which due to Definition \ref{cell:decomposition} is well defined. We call $d_{\max}$ the \textbf{diameter} of the cell decomposition.

Consider again system  \eqref{single:integrator}-\eqref{general:feedback:law}, namely the system
\begin{equation} \label{full:information:dynamics}
\dot{x}_{i}=f_{i}(x_{i},x_{j_{1}},\ldots,x_{j_{|\mathcal{N}_{i}|}})+v_{i},i=1,\ldots,N
\end{equation}

We want to determine sufficient conditions relating the Lipschitz constants $L_{1}$, $L_{2}$, and the bounds $M$, $v_{\max}$ of the system's dynamics, as well as the space and time scales $d_{\max}$ and $\delta t$ of the space-time discretization $\mathcal{S}-\delta t$ which guarantee that $\mathcal{S}-\delta t$ is well posed. As discussed at the beginning of the previous section, we require that the bound on the $f_i(\cdot)$ terms is greater than the maximum magnitude of the free inputs and thus impose the additional restriction
\begin{equation} \label{vmax:vs:M}
v_{\max}<M
\end{equation}

\noindent The desired sufficient conditions for a well posed discretization are provided in the following result.

\begin{prop}\label{admissible:discretizations}
Consider a cell decomposition $\S$ of $\Rat{n}$ and a time step $\delta t$. For the multi-agent system \eqref{full:information:dynamics} a sufficient condition which guarantees that the space-time discretization $\mathcal{S}-\delta t$ is well posed, is that the diameter $d_{\max}$ of $\mathcal{S}$ and the time step $\delta t$ satisfy the restrictions

\begin{align}
d_{\max}&\in\left(0,\frac{v_{\max}^{2}}{4M\tilde{L}}\right] \label{dmax:interval}\\
\delta t &\in\left[\frac{v_{\max}-\sqrt{v_{\max}^{2}-4M\tilde{L}d_{\max}}}{2M\tilde{L}},\frac{v_{\max}+\sqrt{v_{\max}^{2}-4M\tilde{L}d_{\max}}}{2M\tilde{L}}\right] \label{deltat:interval}
\end{align}

\noindent with

\begin{align} 
\tilde{L}_{i}:=&2L_{2}+4L_{1}\sqrt{|\mathcal{N}_{i}|} \label{constant:tildeLi}\\
\tilde{L}:=&\max\{\tilde{L}_{i},i=1,\ldots,N\} \label{constant:tildeL}
\end{align}
 
\noindent and where $L_1$ and $L_2$ are given in \eqref{dynamics:bound1} and \eqref{dynamics:bound2}.

In particular, for each agent $i\in\{1,\ldots,N\}$ and cell configuration $\tilde{l}_i=(l_i,l_i^1,\ldots,l_i^{\Ni})\in\I^{\Ni+1}$ of $i$ we select a reference point
\begin{equation}
(x_{i,G},x_{j_{1},G},\ldots,x_{j_{\Ni},G})\in S_{l_{i}}\times S_{l_{i}^1}\times\cdots\times S_{l_i^{|\mathcal{N}_{i}|}}
\end{equation}

\noindent and define the feedback law  $k_{i,\tilde{l}_i}:\RgeO\times\Rat{(\Ni+1)n}\times\Rat{n}\to\Rat{n}$ as
\begin{equation} \label{feedback:ki}
k_{i,\tilde{l}_i}(t,x_i,x_{j_1},\ldots,x_{j_{\Ni}};x_{i0}):=k_{i,\tilde{l}_i,1}(x_i,x_{j_1},\ldots,x_{j_{\Ni}})+k_{i,\tilde{l}_i,2}(x_{i0})+k_{i,\tilde{l}_i,3}(t;x_{i0})
\end{equation}

\noindent where
\begin{align}
k_{i,\tilde{l}_i,1}(x_{i},x_{j_{1}},\ldots,x_{j_{\Ni}}):=-&[f_i(x_i,x_{j_1},\ldots,x_{j_{\Ni}})-f_i(x_i,x_{j_{1},G},\ldots,x_{j_{\Ni},G})] \nonumber \\
&\forall (x_i,x_{j_1},\ldots,x_{j_{\Ni}})\in\Rat{(\Ni+1)n} \label{feedback:ki1} \\
k_{i,\tilde{l}_i,2}(x_{i0}):=-&\frac{1}{\delta t}[x_{i0}-x_{i,G}],\forall x_{i0}\in\Rat{n} \label{feedback:ki2} \\
k_{i,\tilde{l}_i,3}(t;x_{i0}):=-&\left[\tilde{f}_{i,\tilde{l}_i}\left(\tilde{x}_i(t)+\left(1-\frac{t}{\delta t}\right)(x_{i0}-x_{i,G})\right)-\tilde{f}_{i,\tilde{l}_i}(\tilde{x}_{i}(t))\right] \nonumber \\
&\forall t\in\RgeO,x_{i0}\in\Rat{n} \label{feedback:ki3}
\end{align}

\noindent the function $\tilde{f}_{i,\tilde{l}_i}(\cdot)$ is given as
\begin{equation}\label{averaged:dynamics}
\tilde{f}_{i,\tilde{l}_i}(x_i):=f_i(x_i,x_{j_{1},G},\ldots,x_{j_{\Ni},G}),\forall x_i\in\Rat{n}
\end{equation}

\noindent and $\tilde{x}_i(\cdot)$ is the solution of the initial value problem
\begin{equation}\label{reference:solution}
\dot{\tilde{x}}_i=\tilde{f}_{i,\tilde{l}_i}(\tilde{x}_i),\tilde{x}_i(0)=x_{i,G}
\end{equation}

\noindent Then it follows that $k_{i,\tilde{l}_i}(\cdot)$ satisfies Property (P) and that there exists $l_i'\in\I$ such that condition (C) of Definition \ref{well:posed:discretization}(a) is fulfilled. In particular we choose $l_i'$ such that $\tilde{x}_i(\delta t)\in S_{l_i'}$.
\end{prop}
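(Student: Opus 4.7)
The plan is to verify property (P) for the three-piece feedback $k_{i,\tilde l_i}$ defined in \eqref{feedback:ki}--\eqref{feedback:ki3} and then exhibit an index $l_i' \in \I$ satisfying condition (C) of Definition \ref{well:posed:discretization}(a). For (P1) any $T > \delta t$ will do; for (P2) local Lipschitzness follows because $k_{i,\tilde l_i,1}$ inherits it from $f_i$, $k_{i,\tilde l_i,2}$ is affine in $x_{i0}$, and $k_{i,\tilde l_i,3}$ depends continuously on $t$ through the $C^1$ reference trajectory $\tilde x_i(\cdot)$ of \eqref{reference:solution} and is Lipschitz in $x_{i0}$ by the Lipschitzness of $\tilde f_{i,\tilde l_i}$.

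The crux is (P3). For $x_i \in S_{l_i} + B(R_{\max})$, $x_{j_\kappa} \in S_{l_i^\kappa} + B(R_{\max})$ and $x_{i0} \in S_{l_i}$ one has $|x_{j_\kappa} - x_{j_\kappa,G}| \le d_{\max} + R_{\max}$ and $|x_{i0} - x_{i,G}| \le d_{\max}$. Using \eqref{dynamics:bound1}--\eqref{dynamics:bound2} I estimate $|k_{i,\tilde l_i,1}| \le L_1 \sqrt{\Ni}(d_{\max} + R_{\max})$ (the $\sqrt{\Ni}$ arising from the Euclidean norm of the $\Ni$-tuple of neighbor displacements), $|k_{i,\tilde l_i,2}| \le d_{\max}/\delta t$, and $|k_{i,\tilde l_i,3}| \le L_2 d_{\max}$ for $t \in [0,\delta t]$. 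Multiplying the sum by $\delta t$, invoking $R_{\max} \le 2M\delta t$ (from $v_{\max} < M$) and the consequence $d_{\max} \le v_{\max}\delta t < 2M\delta t$ of \eqref{deltat:interval}, the resulting upper bound collects to $\tilde L_i M \delta t^2 + d_{\max}$. Condition \eqref{deltat:interval} is precisely the quadratic inequality $\tilde L M \delta t^2 - v_{\max}\delta t + d_{\max} \le 0$, so this bound is at most $v_{\max}\delta t$ and (P3) follows.

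To verify (C), I fix an initial cell configuration $\bar l$ with ${\rm pr}_i(\bar l) = \tilde l_i$ together with arbitrary feedbacks satisfying (P) for the remaining agents, and invoke Proposition \ref{completeness:result} to obtain existence of the closed-loop solution on $[0,\delta t]$. The key algebraic observation is that $k_{i,\tilde l_i,1}$ exactly cancels the dependence of $\dot x_i$ on the neighbors' actual states: substituting \eqref{feedback:ki1} into the right-hand side of \eqref{full:information:dynamics} turns $f_i(x_i,x_{j_1}(t),\ldots,x_{j_{\Ni}}(t))$ into $\tilde f_{i,\tilde l_i}(x_i)$, so the $x_i$-dynamics decouple from the rest of the network. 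A direct computation then verifies that $y(t) := \tilde x_i(t) + (1 - t/\delta t)(x_{i0} - x_{i,G})$ satisfies the resulting autonomous ODE with initial condition $y(0) = x_{i0}$; indeed $\dot y(t) = \tilde f_{i,\tilde l_i}(\tilde x_i(t)) - (x_{i0}-x_{i,G})/\delta t$ matches the decoupled right-hand side evaluated at $y(t)$ term by term. By uniqueness $x_i(t) = y(t)$, so $x_i(\delta t) = \tilde x_i(\delta t)$ independently of $x_{i0}$ and of the neighbors' trajectories; choosing $l_i' \in \I$ with $\tilde x_i(\delta t) \in S_{l_i'}$ concludes the argument.

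The main obstacle is the clean assembly of the (P3) bound: tracking the $\sqrt{\Ni}$ factor, bounding the $d_{\max}\delta t$-type cross terms by $M\delta t^2$ via the implicit consequences of \eqref{deltat:interval}, and checking that the coefficients collect exactly to the stated constant $\tilde L_i = 2L_2 + 4L_1\sqrt{\Ni}$. By contrast the decoupling step is essentially mechanical once the ansatz $y(t)$ is read off from the structure of $k_{i,\tilde l_i,2}$ and $k_{i,\tilde l_i,3}$.
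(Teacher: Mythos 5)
Your proof is correct and follows essentially the same route as the paper: the same bounds on the three feedback components $k_{i,\tilde{l}_i,1}$, $k_{i,\tilde{l}_i,2}$, $k_{i,\tilde{l}_i,3}$ collected into the quadratic inequality $M\tilde{L}\,\delta t^{2}-v_{\max}\delta t+d_{\max}\le 0$ for (P3), and the same identity $x_{i}(t)=\tilde{x}_i(t)+\bigl(1-\tfrac{t}{\delta t}\bigr)(x_{i0}-x_{i,G})$ for condition (C). Your two local shortcuts are both valid and slightly cleaner than the paper's: you get $d_{\max}\le R_{\max}$ directly from $d_{\max}\le v_{\max}\delta t$ (an immediate consequence of the quadratic inequality), where the paper argues via monotonicity/convexity of the lower root as a function of $d_{\max}$, and you establish the trajectory identity by differentiating the ansatz and invoking uniqueness, where the paper derives an integral inequality and applies the Gronwall lemma.
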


\begin{proof}
In order to prove our result, we want to show that the requirements of Definition \eqref{well:posed:discretization}(b) are fulfilled. Let $\mathcal{S}=\{S_{l}\}_{l\in\mathcal{I}}$ be a cell decomposition of $\Rat{n}$ with maximum diameter $d_{\max}$ and consider a time step $\delta t$, such that \eqref{dmax:interval} and \eqref{deltat:interval} hold. We want to show that for each $i=1,\ldots,N$ and $\tilde{l}_i=(l_i,l_i^1,\ldots,l_i^{\Ni})\in\I^{\Ni+1}$ there exists a cell index $l_i'\in\I$ such that the transition $l_i\overset{\tilde{l}_i}{\longrightarrow}l_i'$ is well posed with respect to $\S-\delta t$. Pick $i\in\{1,\ldots,N\}$ and $\tilde{l}_i=(l_i,l_i^1,\ldots,l_i^{\Ni})\in\I^{\Ni+1}$.  In order to find $l_i'\in\I$ such that $l_i\overset{\tilde{l}_i}{\longrightarrow}l_i'$ is well posed, we need according to Definition \ref{well:posed:discretization}(a) to find a feedback law \eqref{feedback:for:i} satisfying Property (P) and in such a way that condition (C) is fulfilled. We brake the proof in three steps.

\noindent \textbf{STEP 1: Selection of the feedback $k_{i,\tilde{l}_i}(\cdot)$ and estimation of bounds on $k_{i,\tilde{l}_i,1}(\cdot)$, $k_{i,\tilde{l}_i,2}(\cdot)$ and $k_{i,\tilde{l}_i,3}(\cdot)$ as given in \eqref{feedback:ki1}-\eqref{feedback:ki3}.}

In this step, we use the notation $x$ for a vector $(x_{i},x_{j_{1}},\ldots,x_{j_{|\mathcal{N}_{i}|}})\in\Rat{(|\mathcal{N}_{i}|+1)n}$ and $\bar{x}$ for its projection to its last $|\mathcal{N}_{i}|n$ coordinates, namely, $\bar{x}:=(x_{j_{1}},\ldots,x_{j_{|\mathcal{N}_{i}|}})\in\Rat{|\mathcal{N}_{i}|n}$. As in the statement of the proposition we select an arbitrary reference point $x_{G}(=x_{G,\tilde{l}})=(x_{i,G},\bar{x}_{G})=(x_{i,G},x_{j_{1},G},\ldots,x_{j_{|\mathcal{N}_{i}|},G})\in S_{l_{i}}\times S_{l_i^1}\times\cdots\times S_{l_i^{|\mathcal{N}_{i}|}}$. Then for all $x\in\Rat{(|\mathcal{N}_{i}|+1)n}$ we have
\begin{align} \label{fi:difference}
f_{i}(x) & =f_{i}(x_{i},\bar{x})=f_{i}(x_{i},\bar{x}_{G})+f_{i}(x_{i},\bar{x})-f_{i}(x_{i},\bar{x}_{G})\iff \nonumber \\
f_{i}(x) & =f_{i}(x_{i},\bar{x}_{G})+\Delta_{i,\tilde{l}}(x_{i},\bar{x})
\end{align}

\noindent where
\begin{equation} \label{Deltaix:dfn}
\Delta_{i,\tilde{l}_i}(x_{i},\bar{x}):= f_{i}(x_{i},\bar{x})-f_{i}(x_{i},\bar{x}_{G})
\end{equation}

\noindent We following show that
\begin{equation} \label{dix:bound}
|\Delta_{i,\tilde{l}_i}(x_{i},\bar{x})|\le L_{1}\sqrt{|\mathcal{N}_{i}|}(R_{\max}+d_{\max})
\end{equation}

\noindent for all $x_i\in\Rat{n}$ and $\bar{x}=(x_{j_{1}},\ldots,x_{j_{|\mathcal{N}_{i}|}})$ satisfying
\begin{equation} \label{distance:neighbor:to:simplex}
x_{j_{\kappa}}\in S_{l_{\kappa}}+B(R_{\max}), \kappa=1,\ldots,|\mathcal{N}_{i}|
\end{equation}

\noindent Indeed, let $\bar{x}$ satisfying \eqref{distance:neighbor:to:simplex}. Then for each $\kappa=1,\ldots,|\mathcal{N}_{i}|$ there exists $\tilde{x}_{j_{\kappa}}$ with
\begin{equation} \label{xjtilde:properties}
\tilde{x}_{j_{\kappa}}\in S_{l_{\kappa}}\quad{\rm and}\quad |\tilde{x}_{j_{\kappa}}-x_{j_{\kappa}}|\le R_{\max}
\end{equation}

\noindent Hence, from \eqref{Deltaix:dfn}, \eqref{xjtilde:properties} and  \eqref{dynamics:bound1} we get
\begin{align*}
|\Delta_{i,\tilde{l}_i}(x_{i},\bar{x})|&=|f_{i}(x_{i},\bar{x})-f_{i}(x_{i},\bar{x}_{G})|\le L_{1}|\bar{x}-\bar{x}_{G}| \\
&=L_{1}|(x_{j_{1}}-x_{j_{1},G},\ldots,x_{j_{|\mathcal{N}_{i}|}}-x_{j_{|\mathcal{N}_{i}|},G})|=L_{1}\left(\sum_{\kappa=1}^{|\mathcal{N}_{i}|}|x_{j_{k}}-x_{j_{k},G}|^{2}\right)^{\frac{1}{2}} \\
&\le L_{1}\left(\sum_{\kappa=1}^{|\mathcal{N}_{i}|}(|x_{j_{\kappa}}-\tilde{x}_{j_{\kappa}}|+|\tilde{x}_{j_{\kappa}}-x_{j_{\kappa},G}|)^{2}\right)^{\frac{1}{2}} \\
&\le L_{1}\left(\sum_{\kappa=1}^{|\mathcal{N}_{i}|}(R_{\max}+d_{\max})^{2}\right)^{\frac{1}{2}}= L_{1}\sqrt{|\mathcal{N}_{i}|}(R_{\max}+d_{\max})
\end{align*}

\noindent In the sequel, we define $f_{i,\tilde{l}_i}(\cdot)$ as in \eqref{averaged:dynamics}. By taking into account  \eqref{averaged:dynamics} and \eqref{fi:difference} it follows that
\begin{equation}\label{fi:Deltai}
f_{i}(x)=f_{i,\tilde{l}_i}(x_{i})+\Delta_{i,\tilde{l}}(x_{i},\bar{x}),\forall x\in\Rat{\Ni+1}
\end{equation}

\noindent and that due to \eqref{dynamics:bound2}, that $\tilde{f}_{i,\tilde{l}_i}(\cdot)$ satisfies the Lipschitz condition
\begin{align}
|\tilde{f}_{i,\tilde{l}_i}(x)-\tilde{f}_{i,\tilde{l}_i}(y)|&=|f_{i}(x,x_{j_{1},G},\ldots,x_{j_{|\mathcal{N}_{i}|},G})-f_{i}(y,x_{j_{1},G},\ldots,x_{j_{|\mathcal{N}_{i}|},G})| \nonumber \\
& \le L_{2}|(x,x_{j_{1},G},\ldots,x_{j_{|\mathcal{N}_{i}|},G})-(y,x_{j_{1},G},\ldots,x_{j_{|\mathcal{N}_{i}|},G})| \nonumber \\
& =L_{2}|(x-y,0,\ldots,0)|=L_{2}|x-y|\Rightarrow  \nonumber \\
|\tilde{f}_{i,\tilde{l}_i}(x)-\tilde{f}_{i,\tilde{l}_i}(y)| & \le L_{2}|x-y| \label{tildefi:Lipschitz:const}
\end{align}

\noindent Now define $k_{i,\tilde{l}_i,1}(\cdot)$, $k_{i,\tilde{l}_i,2}(\cdot)$ and $k_{i,\tilde{l}_i,3}(\cdot)$ as in \eqref{feedback:ki1},  \eqref{feedback:ki2} an \eqref{feedback:ki3}, respectively. By virtue of \eqref{tildefi:Lipschitz:const}, the solution $\tilde{x}_i(\cdot)$ of the initial value problem \eqref{reference:solution} is defined for all $t\ge 0$ and thus $k_{i,\tilde{l}_i,3}(\cdot)$ is well defined. Also, from  \eqref{feedback:ki1} and \eqref{Deltaix:dfn} we have
\begin{equation}\label{ki:Deltai}
k_{i,\tilde{l}_i,1}(x_i,\bar{x})=-\Delta_{i,\tilde{l}_i}(x_{i},\bar{x}),\forall (x_i,\bar{x})\in\Rat{(\Ni+1)n}
\end{equation}

\noindent  Hence, we get from \eqref{dix:bound} and \eqref{distance:neighbor:to:simplex} that
\begin{align}
|k_{i,\tilde{l}_i,1}(x_{i},x_{j_{1}},\ldots,x_{j_{\Ni}})| & \le  L_{1}\sqrt{|\mathcal{N}_{i}|}(R_{\max}+d_{\max}), \nonumber \\
\forall x_i & \in\Rat{n},x_{j_{\kappa}}\in S_{l_{\kappa}}+B(R_{\max}), \kappa=1,\ldots,|\mathcal{N}_{i}| \label{ki1:bound}
\end{align}

\noindent Furthermore, by recalling that $x_{i,G}\in S_{l_i}$, it follows directly from \eqref{feedback:ki2} that
\begin{equation} \label{ki2:bound}
|k_{i,\tilde{l}_i,2}(x_{i0})|=\frac{1}{\delta t}|x_{i0}-x_{i,G}|\le \frac{1}{\delta t}d_{\max},\forall x_{i0}\in S_{l_i}
\end{equation}

\noindent and from \eqref{tildefi:Lipschitz:const} and \eqref{feedback:ki3} that
\begin{align}
|k_{i,\tilde{l}_i,3}(t;x_{i0})|&=\left|\tilde{f}_{i,\tilde{l}_i}\left(\tilde{x}_i(t)+\left(1-\frac{t}{\delta t}\right)(x_{i0}-x_{i,G})\right)-\tilde{f}_{i,\tilde{l}_i}(\tilde{x}_i(t))\right| \nonumber \\
&\le L_2\left|\left(\tilde{x}_i(t)+\left(1-\frac{t}{\delta t}\right)(x_{i0}-x_{i,G})\right)-\tilde{x}_i(t)\right| \nonumber \\
&\le L_{2}|x_{i0}-x_{i,G}|\le L_{2}d_{\max},\forall t\in[0,\delta t],x_{i0}\in S_{l_i} \label{ki3:bound}
\end{align}

\noindent \textbf{STEP 2: Verification of Property (P) for the feedback law \eqref{feedback:ki} for $d_{\max}-\delta t$ satisfying  \eqref{dmax:interval} and \eqref{deltat:interval}.}

In this step we prove that the proposed feedback law \eqref{feedback:ki} satisfies Properties (P1), (P2) and (P3). Verification of (P1) and (P2) is rather straightforward, so we focus on (P3), namely, we show that \eqref{feedback:k:bound} holds. By taking into account \eqref{feedback:ki}, \eqref{ki1:bound}, \eqref{ki2:bound} and \eqref{ki3:bound} it suffices to prove that

\begin{equation} \label{condition:dmax:vmax}
L_{1}\sqrt{|\mathcal{N}_{i}|}(R_{\max}+d_{\max})+\frac{1}{\delta t}d_{\max}+L_{2}d_{\max}\le v_{\max}
\end{equation}

\noindent By recalling \eqref{Rmax} and imposing the additional requirement that
\begin{equation} \label{Rmax:vs:dmax}
\delta t(M+v_{\max})\ge d_{\max}\Rightarrow R_{\max}\ge d_{\max}
\end{equation}

\noindent it suffices instead of \eqref{condition:dmax:vmax} to show that

\begin{equation*}
(2L_{1}\sqrt{|\mathcal{N}_{i}|}+L_{2})R_{\max}+\frac{1}{\delta t}d_{\max}\le v_{\max}
\end{equation*}

\noindent which by virtue of \eqref{Rmax} is equivalent to
\begin{equation}
(M+v_{\max})(2L_{1}\sqrt{|\mathcal{N}_{i}|}+L_{2})\delta t^{2}-v_{\max}\delta t+d_{\max}\le 0 \label{quadratic:condition}
\end{equation}

\noindent By taking into account \eqref{vmax:vs:M}, it suffices instead of \eqref{quadratic:condition} to show that
$$
M(2L_{2}+4L_{1}\sqrt{|\mathcal{N}_{i}|})\delta t^{2}-v_{\max}\delta t+d_{\max}\le 0
$$

\noindent which by virtue of \eqref{constant:tildeLi} is equivalent to
\begin{equation}\label{quadratic:condition2}
M\tilde{L}_{i}\delta t^{2}-v_{\max}\delta t+d_{\max}\le 0 
\end{equation}

\noindent Furthermore, by exploiting \eqref{constant:tildeL} we deduce that \eqref{quadratic:condition2} follows from
\begin{equation} \label{quadratic:condition3}
M\tilde{L}\delta t^{2}-v_{\max}\delta t+d_{\max}\le 0
\end{equation}

\noindent In order for the second order equation \eqref{quadratic:condition3} to have at least one real root (if it has real roots they are positive) it is required that
\begin{equation} \label{dmax:bound}
v_{\max}^{2}-4M\tilde{L}d_{\max}\ge 0\iff d_{\max}\le\frac{v_{\max}^{2}}{4M\tilde{L}}
\end{equation}

\noindent Hence, by collecting our requirements \eqref{Rmax:vs:dmax}, \eqref{dmax:bound} and \eqref{quadratic:condition3} together with the fact that $d_{\max}>0$ we have
\begin{align}
\bullet\qquad & 0<d_{\max}\le\frac{v_{\max}^{2}}{4M\tilde{L}} \label{dmax:bound2} \\
\bullet\qquad & \frac{v_{\max}-\sqrt{v_{\max}^{2}-4M\tilde{L}d_{\max}}}{2M\tilde{L}}\le\delta t\le\frac{v_{\max}+\sqrt{v_{\max}^{2}-4M\tilde{L}d_{\max}}}{2M\tilde{L}} \label{deltat:bound2} \\
\bullet\qquad &\frac{1}{M+v_{\max}}d_{\max}\le\delta t \label{deltat:vs:dmax}
\end{align}

\noindent By defining
\begin{equation}\label{function:h}
h(d_{\max}):=\frac{v_{\max}-\sqrt{v_{\max}^{2}-4M\tilde{L}d_{\max}}}{2M\tilde{L}}
\end{equation}

\noindent we obtain that
\begin{equation*}
h'(d_{\max})=\frac{1}{\sqrt{v_{\max}^{2}-4M\tilde{L}d_{\max}}}
\end{equation*}

\noindent Hence, $h'(\cdot)$ is strictly increasing for $0\le d_{\max}<\frac{v_{\max}^{2}}{4M\tilde{L}}$ and furthermore
\begin{equation*}
h'(0)=\frac{1}{v_{\max}};\quad h(0)=0
\end{equation*}

\noindent The later implies that
\begin{equation}\label{function:h:property}
h(d_{\max})\ge\frac{1}{M+v_{\max}}d_{\max},\forall d_{\max}\in\left(0,\frac{v_{\max}^{2}}{4M\tilde{L}}\right]
\end{equation}

\noindent Thus it follows from  \eqref{dmax:interval}, \eqref{deltat:interval}, \eqref{function:h} and \eqref{function:h:property} that \eqref{dmax:bound2}-\eqref{deltat:vs:dmax} are fulfilled (see also Figure 1).

\begin{figure}[H]\label{fig1}
\begin{center}
\begin{tikzpicture}[
axis/.style={very thick, ->, >=stealth'}]
\begin{scope}[rotate=-90, style={very thick}]
        \draw[domain= -2:2, color=orange, bottom color=green, top color=green] plot (\x,4-\x*\x);
\end{scope}
\draw[axis] (-0.5,-2) -- (6,-2) node [right] {$d_{\max}$};
\draw[axis] (0,-2.5) -- (0,3) node [above] {$\delta t$};
\draw[dashed, very thick, color=blue] (0,-2) -- (6,-0.5)  node [right, color=black] {$\frac{1}{v_{\max}}d_{\max}$};
\draw[very thick, color=red] (0,-2) -- (6,-1) node [right, color=black] {$\frac{1}{M+v_{\max}}d_{\max}$};
\draw[dashed] (4,-2) -- (4,0);
\fill[black] (4,0) circle (2pt);
\fill[black] (4,-2) circle (2pt) node[below right] {$\frac{v_{\max}^{2}}{4M\tilde{L}}$};
\coordinate [label=left:$\frac{v_{max}}{M\tilde{L}}$] (A) at (0,2);
\end{tikzpicture}
\caption{\small \sl Feasible $d_{\max}-\delta t$ region}
\end{center}
\end{figure}
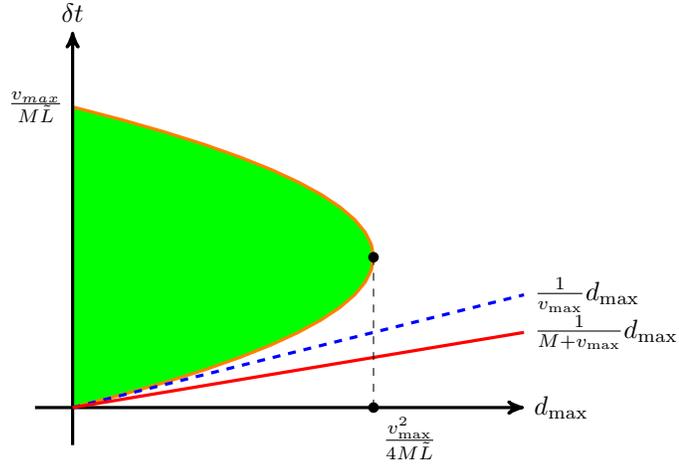

\noindent \textbf{STEP 3: Selection of cell index $l_i'$ and verification of Condition (C).}

Let $\tilde{x}_{i}(\cdot)$ be the solution of the reference trajectory as given by \eqref{reference:solution} and  $l_i'\in\mathcal{I}$ the index of a cell $S_{l_i'}$ with $\tilde{x}_i(\delta t)\in S_{l_i'}$. We prove that for any initial cell configuration $\bar{l}=(\bar{l}_{1},\ldots,\bar{l}_{N})\in\mathcal{I}^{N}$ with ${\rm pr}_i(\bar{l})=\tilde{l}_i$, selection of feedback laws in \eqref{feedback:for:others} which satisfy Property (P) for all $\hat{i}\in\{1,\ldots,N\}\setminus \{{i}\}$ and for each initial condition $x_{i0}\in S_{l_i}$ of $i$ and $x_{\hat{i}0}\in S_{\bar{l}_{\hat{i}}}$ of the other agents $\hat{i}\in\{1,\ldots,N\}\setminus \{{i}\}$, the solution of the closed loop system \eqref{full:information:dynamics}-\eqref{feedback:ki}-\eqref{feedback:for:others} is defined for all $t\in [0,\delta t]$ and the trajectory $x_i(\cdot)$ of agent $i$ at $\delta t$ satisfies

\begin{equation}\label{xi:eq:tildexi}
x_i(\delta t)=\tilde{x}_i(\delta t)
\end{equation}

\noindent namely coincides with the endpoint of the reference trajectory (see Figure 2).

\begin{figure}[H]\label{fig2}
\begin{center}
\begin{tikzpicture}

\draw[color=green,thick] (-0.3,-0.5) -- (0.3,-0.7) -- (0.8,0.1) -- (0.1,0.8)-- (-0.3,-0.5);
\draw[color=green,] (0.3,-0.7) -- (0.8,0.1) -- (1.8,0.8)-- (1.6,-0.5) -- (0.3,-0.7);
\draw[color=green,] (0.1,0.8) -- (1.8,0.8);
\draw[color=green,] (1.8,0.8)-- (2.8,-0.7)-- (1.6,-0.5);
\draw[color=cyan,thick] (1.8,0.8)-- (2.8,-0.7)-- (3.3,0.3) -- (3,1) -- (1.8,0.8);

\fill[black] (3,0.5) circle (2pt) node[above right] {$x_{i}(\delta t)$};

\draw[color=black,<->,thick] (-1.5,-0.9) -- (-1.5,1.5);

\coordinate [label=left:$\textcolor{green}{S_{l_i}}$] (A) at (-0.3,-0.5);
\coordinate [label=right:$\textcolor{cyan}{S_{l_i'}}$] (A) at (2.8,-0.7);
\coordinate [label=left:$d_{\max}$] (A) at (-1.5,0);

\draw[color=red,->,thick] (0,0) -- (3,0.5);
\draw[color=red,->] (0.2,0.6) -- (3,0.5);
\draw[color=red,->] (0.4,-0.4) -- (3,0.5);

\draw[color=black,thin] (0,0.3) circle[radius=1.2cm];
\fill[black] (0.2,0.6) circle (1.5pt) node[above left] {$x_{i0}$};
\fill[black] (0.4,-0.4) circle (1.5pt);
\fill[black] (0,0) circle (2pt) node[above left]{$\tilde{x}_{i0}$};

\end{tikzpicture}
\end{center}
\caption{}
\end{figure}

\noindent We first note that due to the result of Proposition \ref{completeness:result}, the solution of the closed loop system is defined on the whole interval $[0,\delta t]$. In order to show \eqref{xi:eq:tildexi} we show that $x_i(\cdot)$ is an appropriate modification of the reference trajectory $\tilde{x}_i(\cdot)$. In particular, it holds

\begin{equation} \label{xi:eq:tildexi:solution}
x_{i}(t)=\tilde{x}_i(t)+\left(1-\frac{t}{\delta t}\right)(x_{i0}-\tilde{x}_{i0}),\forall t\in [0,\delta t]
\end{equation}

\noindent which implies \eqref{xi:eq:tildexi}. Indeed, by adopting again the notation of Step 1, we have from \eqref{reference:solution}, \eqref{full:information:dynamics}, \eqref{feedback:ki}, \eqref{ki:Deltai} and \eqref{fi:Deltai} that
\begin{align*}
\dot{\tilde{x}}_{i}(t)&=\tilde{f}_{i,\tilde{l}_i}(\tilde{x}_{i}(t)) \\
\dot{x}_{i}(t)&=f_{i}(x(t))+k_{i,\tilde{l}_i}(t,x_i(t),\bar{x}(t);x_{i0}) \\
&=\tilde{f}_{i,\tilde{l}_i}(\tilde{x}_{i}(t))+\Delta_{i,\tilde{l}}(x_{i}(t),\bar{x}(t))+k_{i,\tilde{l}_i,1}(x_i(t),\bar{x}(t))+k_{i,\tilde{l}_i,2}(x_{i0})+k_{i,\tilde{l}_i,3}(t;x_{i0}) \\
&=\tilde{f}_{i,\tilde{l}_i}(\tilde{x}_{i}(t))+k_{i,\tilde{l}_i,2}(x_{i0})+k_{i,\tilde{l}_i,3}(t;x_{i0})
\end{align*}

\noindent and hence, we get
\begin{align*}
\tilde{x}_{i}(t)&=\tilde{x}_{i0}+\int_{0}^{t}\tilde{f}_{i,\tilde{l}_i}(\tilde{x}_{i}(s))ds \\
x_{i}(t)&=x_{i0}+\int_{0}^{t}(\tilde{f}_{i,\tilde{l}_i}(x_i(s))+k_{i,\tilde{l}_i,2}(x_{i0})+k_{i,\tilde{l}_i,3}(s;x_{i0}))ds
\end{align*}

\noindent Then it follows from \eqref{feedback:ki2} and \eqref{feedback:ki3} that
\begin{align*}
x_{i}(t)-\tilde{x}_{i}(t)&=x_{i0}-\tilde{x}_{i0}+\int_{0}^{t}[\tilde{f}_{i,\tilde{l}_i}(x_{i}(s))-\tilde{f}_{i,\tilde{l}_i}(\tilde{x}_{i}(s))+k_{i,\tilde{l}_i,2}(x_{i0})+k_{i,\tilde{l}_i,3}(s;x_{i0})]ds \\
&=x_{i0}-\tilde{x}_{i0}+\int_{0}^{t}[\tilde{f}_{i,\tilde{l}_i}(x_{i}(s))-\tilde{f}_{i,\tilde{l}_i}(\tilde{x}_{i}(s)) \\
&\left.-\tilde{f}_{i,\tilde{l}_i}\left(\tilde{x}_{i}(s)+\left(1-\frac{s}{\delta t}\right)(x_{i0}-\tilde{x}_{i0})\right)+\tilde{f}_{i,\tilde{l}}(\tilde{x}_i(s))-\frac{1}{\delta t}(x_{i0}-\tilde{x}_{i0})\right]ds \\
&=\left(1-\frac{t}{\delta t}\right)(x_{i0}-\tilde{x}_{i0}) \\
&+\int_{0}^{t}\left[\tilde{f}_{i,\tilde{l}_i}(x_i(s))-\tilde{f}_{i,\tilde{l}_i}\left(\tilde{x}_{i}(s)+\left(1-\frac{s}{\delta t}\right)(x_{i0}-\tilde{x}_{i0})\right)\right]ds,\forall t\in [0,\delta t]
\end{align*}

\noindent Hence, we get from \eqref{tildefi:Lipschitz:const} that
\begin{align*}
|x_{i}(t)-\tilde{x}_i(t)&-\left(1-\frac{t}{\delta t}\right)(x_{i0}-\tilde{x}_{i0})| \\
&\le \int_{0}^{t}L_2\left|x_{i}(s)-\tilde{x}_{i}(s)-\left(1-\frac{s}{\delta t}\right)(x_{i0}-\tilde{x}_{i0})\right|ds,\forall t\in [0,\delta t]
\end{align*}

\noindent Application of the Gronwall Lemma implies that \eqref{xi:eq:tildexi:solution} holds and hence, that $x_{i}(\delta t)=\tilde{x}_{i}(\delta t)$ as desired.
\end{proof}

\section{Conclusions}

We have provided a framework in order to extract discrete state transition systems for multi-agent systems under coupled constraints and quantified admissible space-time discretizations which allow for well posed abstractions.

We aim at extending the approach of Proposistion \ref{admissible:discretizations} in order to derive sufficient conditions which guarantee that each agent can reach at least a minimum ($>1$) number of discrete cells in time $\delta t$. Thus we can exploit the corresponding hybrid controllers and the result of Proposition \ref{discrete:transitions:result} for motion planning. Furthermore, we intend to appropriately modify our approach for the case of bounded domains in order to obtain finite transition systems.

\section{Appendix}

\noindent \textbf{Fact I.} Consider an arbitrary set $S\in\Rat{n}$ and a constant $R>0$. Then for every $x\in\partial(S+B(R))$ it holds
$$
|x-y|\ge R,\forall y\in S
$$
\begin{proof}
Indeed, suppose on the contrary that there exists $\tilde{y}\in S$ with $|x-\tilde{y}|\le R-\varepsilon$ for certain $\varepsilon>0$. Then for all $\tilde{x}\in{\rm int}(B_{x}(\varepsilon))$ we have
$$
|\tilde{x}-\tilde{y}|\le|\tilde{x}-x|+|x-\tilde{y}|<\varepsilon+R-\varepsilon=R
$$

\noindent hence $\tilde{x}\in S+B(R)$ for all $\tilde{x}\in{\rm int}(B_{x}(\varepsilon))$ which implies that $x\notin\partial(S+B(R))$ and contradicts our statement.
\end{proof}


\begin{thebibliography}{99}
\bibitem{AeDtGa07}
Asarin E., Dang T. and Girard A., \textit{Hybridization methods for the analysis of nonlinear systems}, 43:451-476, 2007

\bibitem{BcKjp08}
Baier C. and Katoen J.P., \textit{Principles of Model Checking}, The MIT Press, 2008.

\bibitem{BdDd15}
Boskos D. and Dimarogonas D.V., \textit{Robust Connectivity Analysis for Multi-Agent Systems}, to be submitted.

\bibitem{GaMs08}
Girard A. and Martin S., \textit{Motion Planning for Nonlinear Systems using Hybridizations and Robust Controllers on Simplices}, CDC, 2008.

\bibitem{GaMs12}
Girard A. and Martin S., \textit{Synthesis for constrained nonlinear systems using hybridization and robust controllers on simplice}, IEEE Transactions on Automatic Control, 57(4):1046-1051, 2012.

\bibitem{GaTp08}
Girard A. and Pappas G., \textit{Approximatiion Metrics for Discrete and Continuous Systems}, IEEE Transactions on Automatic Control, 52(5):782-798, 2007.

\bibitem{GaTp11}
Girard A. and Pappas G. \textit{Approximate Bisimulation A Bridge Between Computer Science and Control Theory}, European Journal of Control 5(6):568-578, 2011.

\bibitem{Gl02}
Gruene L., \textit{Asymptotic Behavior of Dynamical and Control Systems under Perturbation and Discretization} Springer, 2002.

\bibitem{Pg03}
Pappas G., \textit{Bisimilar linear systems}, Automatica, 39(12):2035-2047, 2003.

\bibitem{PgGaTp08}
Pola G., Girard A. and Tabuada P., \textit{Approximately bisimilar symbolic models for nonlinear control systems}, Automatica, 44: 2508-2516, 2008.

\bibitem{Tp09}
Tabuada P., \textit{Verification and Control of Hybrid Systems}, Springer, 2009.

\bibitem{ZmPgMmTp12}
Zamani M., Pola G., Mazo M. and Tabuada P., \textit{Symbolic Model for Nonlinear Control Systems Without Stability Assumptions}, IEEE Transactions on Automatic Control, 57(7):1804-1809, 2012.
\end{thebibliography}
\end{document}